\newtheorem{thm}{Theorem}[section]
\newtheorem{lem}[thm]{Lemma}
\theoremstyle{definition}
\begin{document}

\title{On the distance preserving trees in graphs}
\author{Toru Araki \and Shingo Osawa \and Takashi Shimizu}

\maketitle

\begin{abstract}
  For a vertex $v$ of a graph $G$, a spanning tree $T$ of $G$ is
  distance-preserving from $v$ if, for any vertex $w$, the distance
  from $v$ to $w$ on $T$ is the same as the distance from $v$ to $w$
  on $G$.
  If two vertices $u$ and $v$ are distinct, then two
  distance-preserving spanning trees $T_{u}$ from $u$ and $T_{v}$ from
  $v$ are distinct in general.
  A purpose of this paper is to give a characterization for a given
  weighted graph $G$ to have a spanning tree $T$ such that $T$ is a
  distance-preserving spanning tree from distinct two vertices.
\end{abstract}


\section{Introduction}
\label{sec:introduction}

Let $G$ be a simple undirected graph.
The vertex set and the edge set of $G$ is denoted by $V(G)$ and
$E(G)$, respectively.
For a subset $U \subseteq V(G)$, the subgraph induced by $U$ is
denoted by $G[U]$.
A \emph{weighted graph} is a graph each edge of whose edges is
assigned a real number (called the \emph{cost} or \emph{weight} of the
edge).
We denote the weight of an edge $e$ of $G$ by $w(e)$.
For a path $P$ of $G$, the \emph{length} of $P$ is defined as the sum
of the weights of its edges.
The \emph{distance} between two vertices $u$ and $v$ of a graph $G$ is
the minimum length of paths from $u$ to $v$, and is denoted by
$d_{G}(u,v)$.
For a subset of vertices $S$, the distance from $u$ to $S$ is defined
by
\begin{displaymath}
  d_{G}(u,S) = \min_{v \in S} d_{G}(u,v).
\end{displaymath}

Let $v$ be a vertex of $G$.
A spanning tree $T$ of $G$ is a \emph{distance-preserving spanning
  tree} (or a \emph{DP-tree} for short) from $v$ if $d_{T}(v,w) =
d_{G}(v,w)$ for each $w \in V(G)$.
An example of a DP-tree $T$ from $u$ in a graph $G$ is shown in
Fig.~\ref{fig:dp-tree-example}.

In a well-known book ``Graphs and Digraphs'' written by Chartrand,
Lesniak, and Zhang~\cite{chartrand11}, we can find an exercise of
Section 2.3: ``Give an example of a connected graph $G$ that is not a
tree and two vertices $u$ and $v$ of $G$ such that a
distance-preserving spanning tree from $v$ is the same as a
distance-preserving spanning tree from $u$.''
In Fig.~\ref{fig:dp-tree-example}, the spanning tree $T$ is
distance-preserving from the two vertices $u$ and $v$.
Hence Fig.~\ref{fig:dp-tree-example} is an answer the question.

A purpose of this paper is to give a complete answer of the question.
That is, for a given weighted graph $G$ and two vertices $u$ and $v$,
we would like to give a characterization for a graph $G$ to have a
spanning tree $T$ such that $T$ is a distance-preserving spanning tree
from $u$ as well as from $v$.

\begin{figure}[tb]
  \centering
  \includegraphics{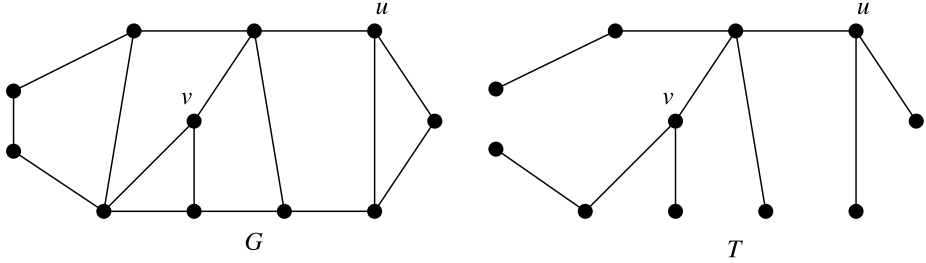}
  \caption{An example of a distance-preserving tree $T$ from $u$ of a
    graph $G$.  We assume that the weights of edges are 1.  The tree
    $T$ is also a distance-preserving tree from $v$.}
  \label{fig:dp-tree-example}
\end{figure}


\section{Main result}
\label{sec:main-result}

In this section, we show the following theorem.
If a spanning tree $T$ of $G$ is a distance-preserving spanning tree
from $u$ as well as from $v$, we say that $T$ is a \emph{common
  distance-preserving spanning tree} of $u$ and $v$ in $G$.

\begin{thm}
  \label{thm:main}

  Let $G$ be a weighted graph and $u$ and $v$ are two vertices of $G$.
  A spanning tree $T$ of $G$ is a common distance-preserving spanning
  tree of $u$ and $v$ if and only if the following three conditions
  hold.

  \begin{enumerate}
  \item[(1)] A shortest $u$-$v$ path $P$ in $G$ is unique.

  \item[(2)] We define the unique shortest $u$-$v$ path as
    $P=(u=v_{0},v_{1},\dots,v_{k}=v)$.
    For any vertex $x$, there is a unique vertex $v_{i} \in V(P)$ such
    that $d_{G}(x,v_{i})=d_{G}(x,V(P))$.

  \item[(3)] For $0 \leq i \leq k$, let $V_{i}=\{x \mid x \in V(G) \text{
      and } d_{G}(x,v_{i})=d_{G}(x,V(P))\}$.
    If $e = xy \in E(G)$ for $x \in V_{i}$ and $y \in V_{j}$, then
    $w(e) \geq d_{G}(v_{i},v_{j})$ and
    $|d_{G}(v_{i},x)-d_{G}(v_{j},y)| \leq w(e)-d_{G}(v_{i},v_{j})$.
  \end{enumerate}
\end{thm}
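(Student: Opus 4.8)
The plan is to prove the two implications separately, using throughout the basic fact that a spanning tree is distance-preserving from a vertex $s$ exactly when the unique $s$-to-$w$ path in the tree is a shortest $s$-$w$ path of $G$ for every $w$. The unifying idea is a \emph{factorisation principle}: for a vertex $x$ lying in the part $V_i$, I want to establish both $d_G(u,x)=d_G(u,v_i)+d_G(v_i,x)$ and $d_G(v,x)=d_G(v,v_i)+d_G(v_i,x)$, so that every geodesic from $u$ (and from $v$) to $x$ may be routed first along $P$ to $v_i$ and then from $v_i$ to $x$. Once this holds, a tree that reaches $x$ through $v_i$ is automatically geodesic from both endpoints of $P$, which is what makes a \emph{common} tree possible.

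For sufficiency I would first derive the factorisation from $(3)$. For an edge $e=xy$ with $x\in V_i$ and $y\in V_j$, the fact that $v_i,v_j$ lie on the shortest path $P$ gives $|d_G(u,v_i)-d_G(u,v_j)|=d_G(v_i,v_j)$; combining this with the second inequality of $(3)$ and the triangle inequality shows that the function $f(y)=d_G(u,v_i)+d_G(v_i,y)$ on $V_i$ satisfies $|f(x)-f(y)|\le w(e)$ across every edge, and the analogous $g(y)=d_G(v,v_i)+d_G(v_i,y)$ does too. Since $f(u)=g(v)=0$, summing these Lipschitz bounds along a shortest $u$-$x$ (resp. $v$-$x$) path yields $f(x)\le d_G(u,x)$ and $g(x)\le d_G(v,x)$, while the triangle inequality gives the reverse bounds; hence the factorisation holds for all $x$.

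Next I would show that every geodesic realising $d_G(v_i,x)$ stays inside $V_i$: if an internal vertex $z$ lay in $V_j$ with $j\ne i$, then by the uniqueness in $(2)$ we have $d_G(z,v_j)<d_G(z,v_i)$, and substituting into $d_G(v_i,x)=d_G(v_i,z)+d_G(z,x)$ contradicts the minimality of $d_G(x,V(P))=d_G(x,v_i)$. Therefore $d_{G[V_i]}(v_i,x)=d_G(v_i,x)$, so $G[V_i]$ is connected and carries a distance-preserving spanning tree $T_i$ rooted at $v_i$. Gluing $T_0,\dots,T_k$ with the edges of $P$ gives a spanning tree $T$ (an edge count of $\sum_i(|V_i|-1)+k=n-1$ confirms this), in which the $u$-$x$ path runs along $P$ to $v_i$ and then through $T_i$, so its length is $d_G(u,v_i)+d_G(v_i,x)=d_G(u,x)$ by the factorisation, and symmetrically from $v$; thus $T$ is common distance-preserving.

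For necessity, let $T$ be a common distance-preserving tree, let $P=(u=v_0,\dots,v_k=v)$ be the $u$-$v$ path of $T$ (a shortest path, since $T$ preserves distances from $u$), and let $C_0,\dots,C_k$ be the components obtained by deleting the edges of $P$, with $v_i\in C_i$. The tree path from $u$ and from $v$ to any $x\in C_i$ must pass through $v_i$, which gives the factorisation directly; comparing $d_G(v,v_i)$ with $d_G(v,v_j)$ then forces $v_i$ to be the strictly nearest point of $P$ to $x$, so $C_i=V_i$ and $(2)$ holds. Writing $D=d_G(v_i,v_j)$ and $A=d_G(v_i,x)-d_G(v_j,y)$ for an edge $xy$, the edge bounds $|d_G(u,x)-d_G(u,y)|\le w(e)$ and $|d_G(v,x)-d_G(v,y)|\le w(e)$ reduce, via the opposite signs of $d_G(u,v_i)-d_G(u,v_j)$ and $d_G(v,v_i)-d_G(v,v_j)$, to $|A-D|\le w(e)$ and $|A+D|\le w(e)$; adding and subtracting these yields $w(e)\ge D$ and $|A|\le w(e)-D$, which is exactly $(3)$. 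I expect the necessity of $(1)$ to be the main obstacle: one must rule out a second shortest $u$-$v$ path, and the natural route is to use the factorisation together with positivity of the weights to show that any shortest $u$-$v$ path meets $V(P)$ only in the $v_i$, and then to argue that the tight inequalities of $(3)$ leave no room for a distinct geodesic through a chord $v_av_b$; nailing down this last tie-breaking step is the delicate part of the argument.
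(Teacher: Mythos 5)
Your sufficiency argument and your proofs of the necessity of conditions (2) and (3) are correct and essentially coincide with the paper's own proof: your Lipschitz potential $f(x)=d_{G}(u,v_{i})+d_{G}(v_{i},x)$ is exactly the paper's function $W(x)=p(x)+h(x)$, your claim that geodesics from $v_{i}$ to points of $V_{i}$ stay inside $V_{i}$ is the paper's pair of lemmas showing $G[V_{i}]$ is connected and distance-preserving, and your decomposition of $T$ into components $C_{0},\dots,C_{k}$ after deleting $E(P)$ is a mild repackaging of the paper's analysis of tree paths.

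The genuine gap is the necessity of condition (1): you never prove it, you only sketch a route and explicitly concede that the final ``tie-breaking'' step (ruling out a second geodesic through a chord $v_{a}v_{b}$) remains open. Your unease is justified, but not because the step is delicate: the step is impossible, because the necessity of (1) as stated is false for weighted graphs. Consider $G$ on vertices $u,v_{1},v_{2},v$ with edges $uv_{1},v_{1}v_{2},v_{2}v$ of weight $1$ and a chord $uv_{2}$ of weight $2$. The path $(u,v_{1},v_{2},v)$, viewed as a spanning tree $T$, realizes distances $1,2,3$ from $u$ and $1,2,3$ from $v$, which agree with the distances in $G$, so $T$ is a common distance-preserving spanning tree of $u$ and $v$; yet $(u,v_{2},v)$ is a second shortest $u$-$v$ path (both have length $3$), so condition (1) fails. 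The paper's proof of Lemma~\ref{lem:nec_one} conceals the same error: it asserts that a second shortest path $P_{1}$ must contain a vertex $x$ not on $P$, which fails precisely when $P_{1}$ proceeds through a chord joining two vertices of $P$ whose weight equals their distance, as in this example. What your factorisation genuinely yields (and what you correctly identified as provable) is only that every vertex of every shortest $u$-$v$ path lies on $P$; the theorem becomes true, and your proof becomes complete, if condition (1) is weakened to exactly that, e.g.\ ``there is a shortest $u$-$v$ path $P$ such that every shortest $u$-$v$ path has all of its vertices on $P$.'' So the honest verdict is: your attempt is incomplete at (1), but the missing step cannot be supplied by you, by the paper, or by anyone, and your instinct about where the difficulty lies was sharper than the paper's treatment of it.
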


We first show the necessary condition of Theorem~\ref{thm:main}.

\begin{lem}
  \label{lem:nec_one}
  If $u$ and $v$ have a common distance-preserving spanning tree $T$
  in $G$, a shortest $u$-$v$ path is unique.
\end{lem}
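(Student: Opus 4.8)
The plan is to prove the contrapositive's natural form directly: assuming $T$ is a common DP-tree of $u$ and $v$, I want to show the shortest $u$-$v$ path is unique. The key observation is that $T$ itself contains a $u$-$v$ path, call it $P_T$, and since $T$ is distance-preserving from $u$, we have $d_T(u,v) = d_G(u,v)$, so $P_T$ is a shortest $u$-$v$ path in $G$. Thus at least one shortest path exists inside $T$; the work is to rule out any other shortest path in $G$.

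Let me set up the argument. Suppose for contradiction that there are two distinct shortest $u$-$v$ paths in $G$. At least one of them, $P_T$, lies in $T$. Let $Q$ be a second shortest $u$-$v$ path in $G$ with $Q \neq P_T$. Since $Q \neq P_T$ but $T$ is a tree (so the $u$-$v$ path in $T$ is unique), $Q$ must use at least one edge $e = xy$ not in $T$. The plan is to exploit both DP-properties simultaneously: walking along $Q$ from $u$, the initial segment has $u$-distances that match $G$-distances, and walking from $v$ along $Q$, the terminal segment has $v$-distances matching $G$-distances. First I would pick an edge $e=xy$ on $Q$ that is not in $T$, with $x$ closer to $u$ along $Q$ and $y$ closer to $v$.

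The crucial step is a distance accounting. Because $Q$ is a shortest $u$-$v$ path through $x$ and $y$, we have $d_G(u,x) + w(e) + d_G(y,v) = d_G(u,v)$, and also $d_G(u,y) = d_G(u,x)+w(e)$ and $d_G(x,v)=w(e)+d_G(y,v)$. Now I use that $T$ preserves distances from $u$: the path from $u$ to $x$ in $T$ has length $d_G(u,x)$, and from $u$ to $y$ in $T$ has length $d_G(u,y)=d_G(u,x)+w(e)$. Similarly $T$ preserves distances from $v$. The idea is that the $T$-path from $x$ to $y$ must then both realize a shortest $u$-detour and a shortest $v$-detour; combining $d_G(u,v)=d_G(u,x)+d_T(x,y)+d_G(y,v)$ (which would follow if the $T$-path from $u$ to $v$ passes through both $x$ and $y$) against $d_G(u,v)=d_G(u,x)+w(e)+d_G(y,v)$ forces $d_T(x,y)=w(e)$, contradicting that $e\notin T$ would give a strictly different route. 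The delicate point I expect to be the main obstacle is precisely arguing that the unique $T$-path from $u$ to $v$ must pass through both $x$ and $y$ (or otherwise deriving the contradiction), since in general the $T$-geodesics from $u$ to $x$ and from $u$ to $y$ need not align neatly with $P_T$.

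To navigate that obstacle cleanly, I would instead argue as follows. Consider the vertex $y$ on $Q$. On one hand $d_G(u,y)=d_G(u,x)+w(e)$; on the other hand, since $T$ is a DP-tree from $u$, the parent of $y$ on the $u$-rooted tree $T$ is some neighbor $y'$ with $d_G(u,y')+w(y'y)=d_G(u,y)$. If $y'=x$ then $e\in T$, contradicting our choice; otherwise $y$ has two distinct shortest-path predecessors from $u$, which I can leverage together with the DP-property from $v$ to produce two distinct shortest $u$-$v$ paths both consistent with $T$, violating the uniqueness of the $u$-$v$ path inside the tree $T$. The heart of the proof is thus combining the two rooted-tree shortest-path structures (from $u$ and from $v$) and showing any second shortest $u$-$v$ path in $G$ forces a branching that a single tree cannot accommodate; verifying that this branching genuinely yields a cycle in $T$, or two distinct tree paths, is the step requiring the most care.
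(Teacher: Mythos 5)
Your opening moves coincide with the paper's: the $u$-$v$ path $P_T$ in $T$ is a shortest path because $T$ is distance-preserving, and any second shortest path $Q$ must leave $T$. After that, your argument has a genuine gap at exactly the step you flag. From the two shortest-path predecessors $x$ and $y'$ of $y$, you claim you can ``produce two distinct shortest $u$-$v$ paths both consistent with $T$, violating the uniqueness of the $u$-$v$ path inside the tree.'' This does not follow: splicing the tree path from $u$ to $y'$, the edge $y'y$, and the tail of $Q$ from $y$ to $v$ does give another shortest $u$-$v$ path in $G$, but its tail need not lie in $T$, so you never obtain two $u$-$v$ paths \emph{inside} $T$, nor a cycle in $T$. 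A vertex having two shortest-path predecessors from $u$ is not in itself contradictory; the hypothesis constrains distances from $u$ and $v$, not the multiplicity of shortest paths to intermediate vertices. Concretely, take the triangle on $u,a,v$ with $w(ua)=w(av)=1$ and $w(uv)=2$, and $T$ the path $u$-$a$-$v$: then $T$ is distance-preserving from both $u$ and $v$, $Q=(u,v)$ is a second shortest path, $y=v$ has the two distinct predecessors $x=u$ and $y'=a$, and yet no contradiction is available because two shortest paths genuinely coexist here. (This example also shows that your earlier intended contradiction, $d_T(x,y)=w(e)$ for a non-tree edge $e$, is not a contradiction at all, and it exposes an assumption the paper itself leaves implicit, namely that an edge's weight cannot equal the $G$-distance between its endpoints via another route, as is automatic for unit weights. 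But it shows in particular that your pivot, which uses only an edge of $Q$ off $T$, cannot be completed as stated.)

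The irony is that the approach you abandoned is the one that works, and your stated obstacle is a red herring: you do not need the $T$-path from $u$ to $v$ to pass through both $x$ and $y$. What you need is a \emph{vertex} $x$ of $Q$ that is not on $P_T$ (this is what the paper takes, and it is where the implicit weight assumption enters). Let $b$ be the branch vertex, i.e.\ the vertex of $P_T$ where the tree path from $x$ meets $P_T$. Then $d_T(u,x)=d_T(u,b)+d_T(b,x)$ and $d_T(x,v)=d_T(x,b)+d_T(b,v)$, and since $b$ lies on $P_T$ we have $d_T(u,b)+d_T(b,v)=d_T(u,v)$. Because $x$ lies on the shortest path $Q$, $d_G(u,x)+d_G(x,v)=d_G(u,v)$, and the two distance-preserving properties give $d_T(u,x)=d_G(u,x)$ and $d_T(x,v)=d_G(x,v)$. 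Combining,
\begin{equation*}
  d_G(u,v) = d_T(u,x)+d_T(x,v) = d_T(u,v) + 2\,d_T(x,b) = d_G(u,v) + 2\,d_T(x,b),
\end{equation*}
so $d_T(x,b)=0$ and $x=b\in V(P_T)$, contradicting the choice of $x$. This telescoping identity through the branch vertex, using both distance-preserving properties at a single off-path vertex, is the one missing idea; without it (or an equivalent), your proposal does not close.
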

\begin{proof}
  Let $P$ be the $u$-$v$ path in $T$.
  Since $T$ is distance-preserving from $v$, $P$ is a shortest $u$-$v$
  path.
  Assume to the contrary that there is another shortest $u$-$v$ path
  $P_{1}$.
  Then there is a vertex $x$ on $P_{1}$ but not on $P$.

  Let $P_{x}=(u=x_{1},x_{2},\dots,x_{k}=x)$ be the $u$-$x$ path in $T$.
  Let $x_{i}$ be the vertex of $P_{x}$ such that $x_{i-1}$ is on $P$
  but $x_{i}$ is not on $P$.
  Since $P_{x}$ is a shortest $u$-$x$ path, we have
  $d_{T}(u,x)=d_{T}(u,x_{i})+d_{T}(x_{i},x)$.
  Similarly, we obtain $d_{T}(x,v)=d_{T}(x,x_{i})+d_{T}(x_{i},v)$.
  Since $x$ is a vertex on the shortest $u$-$v$ path, we have
  \begin{eqnarray*}
    d_{G}(u,v) &=& d_{G}(u,x) + d_{G}(x,v) \\
    &=& d_{T}(u,x) + d_{T}(x,v) \\
    &=& d_{T}(u,x_{i}) + d_{T}(x_{i},x) + d_{T}(x,x_{i})+d_{T}(x_{i},v) \\
    &=& d_{T}(u,v) + 2 d_{T}(x,x_{i}) \\
    &=& d_{G}(u,v) + 2 d_{T}(x,x_{i}).
  \end{eqnarray*}
  Thus $d_{T}(x,x_{i})=0$, and hence a contradiction is obtained.
\end{proof}

If $u$ and $v$ have a common distance-preserving spanning tree $T$, by
Lemma~\ref{lem:nec_one}, there is a unique shortest $u$-$v$ path
$P=(u=v_{0},v_{1},\dots,v_{k}=v)$.
By the proof of Lemma~\ref{lem:nec_one}, the unique $u$-$v$ path in
$T$ is the unique shortest $u$-$v$ path $P$ in $G$.

\begin{lem}
  \label{lem:nec_two}
  Assume that $u$ and $v$ have a common distance-preserving spanning
  tree $T$ in $G$.
  Let $P=(u=v_{0},v_{1},\dots,v_{k}=v)$ be the unique shortest $u$-$v$
  path in $G$.
  For any vertex $x$ of $G$, there is a unique vertex $v_{i}$ of $P$
  such that $d_{G}(x,v_{i}) = d_{G}(x,V(P))$.
\end{lem}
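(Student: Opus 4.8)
The plan is to exploit the tree $T$ to single out the required vertex and then to transfer minimality, and its uniqueness, from $T$ to $G$. Existence of a minimizer is immediate since $V(P)$ is finite, so the entire content is uniqueness. First I would locate, for a fixed vertex $x$, the vertex $v_{j}$ of $P$ at which the path from $x$ to $P$ in $T$ first meets $P$; equivalently, $v_{j}$ is the unique vertex of $P$ closest to $x$ in the tree $T$. Because $T$ is a tree and $P$ is a subpath of it (recall that the $u$-$v$ path in $T$ is exactly $P$), this projection is well defined, and it has the key property that every path in $T$ from $x$ to a vertex $v_{i}$ of $P$ runs through $v_{j}$; in particular $d_{T}(x,v_{i})=d_{T}(x,v_{j})+d_{T}(v_{j},v_{i})$ for all $i$.

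The next step is to pull distances back to $G$. Since the $u$-$x$ path in $T$ passes through $v_{j}$ and $T$ is distance-preserving from $u$, I get $d_{G}(u,x)=d_{T}(u,x)=d_{T}(u,v_{j})+d_{T}(v_{j},x)=d_{G}(u,v_{j})+d_{T}(v_{j},x)$. Combining this with the triangle inequality $d_{G}(u,x)\le d_{G}(u,v_{j})+d_{G}(v_{j},x)$ and the trivial bound $d_{G}(v_{j},x)\le d_{T}(v_{j},x)$ squeezes everything into equalities, yielding $d_{G}(x,v_{j})=d_{T}(x,v_{j})$. Write $\rho := d_{G}(x,v_{j})$; this is the candidate value of $d_{G}(x,V(P))$.

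Finally I would show that $v_{j}$ is a strict minimizer in $G$. Fix $v_{i}$ with $i<j$. Since $v_{i}$ lies on the shortest $u$-$v_{j}$ subpath of $P$, we have $d_{G}(u,v_{j})=d_{G}(u,v_{i})+d_{G}(v_{i},v_{j})$. Plugging this and the triangle inequality $d_{G}(u,x)\le d_{G}(u,v_{i})+d_{G}(v_{i},x)$ into the identity $d_{G}(u,x)=d_{G}(u,v_{j})+\rho$ gives $d_{G}(v_{i},x)\ge \rho+d_{G}(v_{i},v_{j})$, which is strictly larger than $\rho$ because the weights are positive. For $i>j$ the same computation works verbatim with $u,v_{0}$ replaced by $v,v_{k}$, using that $T$ is distance-preserving from $v$ as well. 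Hence $d_{G}(x,v_{i})>\rho$ for every $i\ne j$, so $v_{j}$ is the unique vertex attaining $d_{G}(x,V(P))$.

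The substantive point, and the one place where having a \emph{common} tree (not merely two separate trees) is essential, is the squeeze argument of the second step together with the two-sided estimate of the third: uniqueness of the closest vertex is automatic inside the tree $T$, but a priori $G$ could have ties, and it is the simultaneous distance-preservation from both $u$ and $v$ (one endpoint controlling the indices $i<j$, the other controlling $i>j$) that rules them out. I expect the mild bookkeeping about which endpoint to invoke for a given $i$ to be the only fussy part; everything else reduces to triangle inequalities and the tree-projection identity.
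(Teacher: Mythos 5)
Your proof is correct and follows essentially the same route as the paper: both single out the tree-projection $v_{j}$ of $x$ onto $P$ (the last $P$-vertex on the $u$-$x$ path in $T$), then use distance-preservation from $u$ to eliminate candidates $v_{i}$ with $i<j$ and distance-preservation from $v$ for $i>j$, with positivity of edge weights giving strictness. The only cosmetic difference is that the paper exploits that every $v_{l}$ with $l\le j$ lies on the shortest $u$-$x$ path in $G$, obtaining the exact equalities $d_{G}(x,v_{l})=d_{G}(u,x)-d_{G}(u,v_{l})$ and hence strict monotonicity on both sides of $v_{j}$, whereas you work only at the projection point and get the bound $d_{G}(v_{i},x)\ge \rho+d_{G}(v_{i},v_{j})$ from the squeeze identity plus triangle inequalities.
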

\begin{proof}
  Let $x$ be a vertex of $G$.
  If $x$ is on $P$, the lemma is trivially true.
  So we assume that $x \not\in V(P)$.
  Let $P_{x}$ be the $u$-$x$ path of $T$.
  Since $T$ is distance-preserving from $u$, $P_{x}$ is a shortest
  path from $u$ to $x$.
  Hence $d_{G}(u,x)=d_{G}(u,w)+d_{G}(w,x)$ for every vertex $w$ of
  $P_{x}$.

  Since $u = v_{0} \in V(P)$ and $x \not\in V(P)$, the path $P_{x}$
  contains a unique vertex $v_{i} \in V(P)$ such that $v_{l} \not\in
  V(P)$ for every $l > i$ (if $P_{x}$ has $v_{k}=v$, then $v_{i}=k$).
  For $0 \leq j \leq i$, we have $d_{T}(u,v_{0}) < d_{T}(u,v_{1}) <
  \dots < d_{T}(u,v_{i})$.
  Since $T$ is distance-preserving from $u$, we have
  \begin{equation}
    \label{eq:2}
    d_{G}(u,v_{0}) < d_{G}(u,v_{1}) < \dots < d_{G}(u,v_{i}).
  \end{equation}
  Since $P_{x}$ is a shortest $u$-$x$ path in $G$, for $0 \leq j \leq
  i$, we have $d_{G}(u,x) = d_{G}(u,v_{j}) + d_{G}(v_{j},x)$.
  Thus $d_{G}(x,v_{j}) = d_{G}(u,x)-d_{G}(u,v_{j})$ for $0 \leq j \leq i$.
  Therefore, by \eqref{eq:2}, we obtain
  \begin{displaymath}
    d_{G}(x,v_{i}) < d_{G}(x,v_{i-1}) < \dots < d_{G}(x,v_{0}) = d_{G}(x,u).
  \end{displaymath}

  Similarly, since $T$ is distance-preserving from $v$, for $i \leq l
  \leq k$, we obtain $d_{G}(x,v_{l}) = d_{G}(v,x)-d_{G}(v_{l},v)$, and
  thus we obtain
  \begin{displaymath}
    d_{G}(x,v_{i}) < d_{G}(x,v_{i+1}) < \dots < d_{G}(x,v_{k}) = d_{G}(x,v).
  \end{displaymath}
  Hence the vertex $v_{i}$ is the unique nearest vertex in $P$ from
  $x$, we obtain $d_{G}(x,v_{i})=d_{G}(x,V(P))$.
\end{proof}

For $0 \leq i \leq k$, we define
\begin{equation}
  \label{eq:1}
  V_{i} = \{ x \mid x \in V(G) \text{ and } d_{G}(x,v_{i}) =
  d_{G}(x,V(P)) \},
\end{equation}
where $P=(u=v_{0},v_{1},\dots,v_{k}=v)$ is the unique shortest $u$-$v$
path defined in Lemma~\ref{lem:nec_two}.

By Lemma~\ref{lem:nec_two}, we can see that $V_{i} \cap V_{j} =
\emptyset$ and $V_{0} \cup V_{1} \cup \dots \cup V_{k} = V(G)$.
That is, $V_{0} \cup V_{1} \cup \dots \cup V_{k}$ is a partition of
$V(G)$.

By the proof of Lemma~\ref{lem:nec_two}, if $x \in V_{i}$, the $u$-$x$
path $P_{x}$ in $T$ contains $v_{i}$, and $P_{x}$ is also a shortest
$u$-$x$ path of $G$.
Hence, for every $x \in V_{i}$, we have $d_{G}(v_{i},x) =
d_{T}(v_{i},x)$.

\begin{lem}
  \label{lem:nec_three}
  Let $V_{0} \cup V_{1} \cup \dots \cup V_{k}$ be a partition defined
  by~\eqref{eq:1}.
  If $e = xy \in E(G)$ for $x \in V_{i}$ and $y \in V_{j}$, then $w(e)
  \geq d_{G}(v_{i},v_{j})$ and $|d_{G}(v_{i},x)-d_{G}(v_{j},y)| \leq
  w(e)-d_{G}(v_{i},v_{j})$.
\end{lem}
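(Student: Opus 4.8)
The plan is to prove the single inequality $w(e) \ge d_G(v_i,v_j) + |d_G(v_i,x) - d_G(v_j,y)|$, since this is exactly the second claimed bound after rearrangement, and it automatically yields the first bound $w(e) \ge d_G(v_i,v_j)$ because the absolute value is nonnegative. By the symmetry of the statement under swapping the pairs $(x,i)$ and $(y,j)$ --- which leaves $e$ and $w(e)$ unchanged --- I may assume $i \le j$. Write $a = d_G(v_i,x)$, $b = d_G(v_j,y)$, and $c = d_G(v_i,v_j)$, so the goal becomes $w(e) \ge c + |a-b|$, that is, both $w(e) \ge c + a - b$ and $w(e) \ge c + b - a$.

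The key step, and the one I expect to carry the real content, is to pin down exactly the distance from $x$ to the \emph{farther} path vertex $v_j$ and the distance from $y$ to the \emph{nearer} path vertex $v_i$. Reusing the computation in the proof of Lemma~\ref{lem:nec_two}: since $x \in V_i$, the $v$-$x$ path in $T$ descends along $P$ through $v_k,\dots,v_i$ and then leaves $P$, so distance-preservation from $v$ gives $d_G(x,v_l) = d_G(v,x) - d_G(v_l,v)$ for all $l \ge i$. Subtracting the cases $l=j$ and $l=i$ and using the additivity $d_G(v_i,v) = d_G(v_i,v_j) + d_G(v_j,v)$ along the unique shortest path $P$, I obtain $d_G(x,v_j) = a + c$. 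Symmetrically, since $y \in V_j$ and $i \le j$, distance-preservation from $u$ forces the $u$-$y$ path in $T$ to pass through $v_0,\dots,v_j$, whence $d_G(y,v_l) = d_G(u,y) - d_G(u,v_l)$ for $l \le j$; subtracting the cases $l=i$ and $l=j$ yields $d_G(y,v_i) = b + c$.

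With these two identities in hand, the conclusion follows from two applications of the triangle inequality across the edge $e$, using $d_G(x,y) \le w(e)$. From $d_G(x,v_j) \le w(e) + d_G(y,v_j)$ I get $a + c \le w(e) + b$, i.e.\ $w(e) \ge c + a - b$; from $d_G(y,v_i) \le w(e) + d_G(x,v_i)$ I get $b + c \le w(e) + a$, i.e.\ $w(e) \ge c + b - a$. Together these give $w(e) \ge c + |a-b|$, which is the second inequality of the lemma, and the first inequality is the special case obtained by dropping the nonnegative term $|a-b|$.

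The main obstacle is establishing the two \emph{exact} distance identities $d_G(x,v_j)=a+c$ and $d_G(y,v_i)=b+c$: the subtlety is that they require the equalities coming from distance-preservation, not merely the triangle-inequality estimates such as $d_G(x,v_j) \le a+c$ that hold in any graph, and they rely on the precise shape of the relevant tree paths as worked out for Lemma~\ref{lem:nec_two}. Once these are secured the remaining estimates are routine, and the degenerate case $i=j$ (where $c=0$) is subsumed automatically, since both identities and both inequalities remain valid in that case.
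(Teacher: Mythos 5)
Your proof is correct, and it is essentially the paper's own argument in different packaging: your two identities $d_G(x,v_j)=a+c$ (from distance-preservation from $v$) and $d_G(y,v_i)=b+c$ (from distance-preservation from $u$), combined with the triangle inequality across $e$, yield exactly the same pair of one-sided bounds $d_G(v_i,v_j)+(b-a)\leq w(e)$ and $d_G(v_i,v_j)+(a-b)\leq w(e)$ that the paper derives as its inequalities \eqref{eq:7} and \eqref{eq:5} by expanding the adjacency bound $d_G(u,y)\leq d_G(u,x)+w(e)$ and its mirror from $v$ along the tree paths. Both arguments lean on the same structural facts extracted from the proof of Lemma~\ref{lem:nec_two} (the tree paths from $u$ and from $v$ to a vertex of $V_i$ pass through $v_i$ and are shortest in $G$), so this is the same approach, merely organized around exact cross-distances rather than inequalities in $d_T$.
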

\begin{proof}
  If $i=j$, the lemma is true.
  Hence we assume that $i < j$.

  Since $x \in V_{i}$ is adjacent to $y \in V_{j}$, we obtain
  \begin{equation}
    \label{eq:6}
    d_{G}(u,y) \leq d_{G}(u,x) + w(e).
  \end{equation}
  Since $T$ is distance-preserving from $u$, we have
  \begin{align*}
    d_{T}(u, x) &= d_{T}(u, v_{i}) + d_{T}(v_{i}, x), \\
    d_{T}(u, y) &= d_{T}(u, v_{j}) + d_{T}(v_{j}, y).
  \end{align*}
  Thus, by \eqref{eq:6}, $d_{T}(u, v_{j}) + d_{T}(v_{j}, y) \leq  d_{T}(u, v_{i}) +
  d_{T}(v_{i}, x) + w(e)$.
  Since $d_{T}(u,v_{j}) -d_{T}(u,v_{i}) = d_{T}(v_{i},v_{j}) =
  d_{G}(v_{i},v_{j})$, we obtain
  \begin{equation}
    \label{eq:7}
    d_{G}(v_{i},v_{j}) \leq d_{T}(v_{i}, x) - d_{T}(v_{j}, y) + w(e).
  \end{equation}
  Similarly, by considering the fact that $T$ is distance-preserving
  from $v$, we obtain
  \begin{equation}
    \label{eq:5}
    d_{G}(v_{i},v_{j}) \leq d_{T}(v_{j}, y) - d_{T}(v_{i}, x) + w(e).
  \end{equation}
  By adding the both side of inequalities \eqref{eq:7} and
  \eqref{eq:5}, we have $d_{G}(v_{i},v_{j}) \leq w(e)$.

  From \eqref{eq:7} and the fact $d_{G}(v_{i},x)=d_{T}(v_{i},x)$, we
  obtain
  \begin{displaymath}
    d_{G}(v_{i}, x) - d_{G}(v_{j}, y) \geq  -(w(e) - d_{G}(v_{i},v_{j})).
  \end{displaymath}
  Similarly, from \eqref{eq:5},
  \begin{displaymath}
    d_{T}(v_{i}, x) - d_{T}(v_{j}, y) \leq w(e) - d_{G}(v_{i},v_{j}).
  \end{displaymath}
  Thus we obtain
  \begin{displaymath}
    |d_{G}(v_{i}, x) - d_{G}(v_{j}, y)| \leq w(e) - d_{G}(v_{i},v_{j}).
  \end{displaymath}
\end{proof}

By Lemmas~\ref{lem:nec_one}, \ref{lem:nec_two} and
\ref{lem:nec_three}, we have shown the necessary condition in
Theorem~\ref{thm:main}.



Next we prove the sufficiency of Theorem~\ref{thm:main}.
We assume that two vertices $u$ and $v$ in $G$ satisfy the following
three conditions.

\begin{enumerate}
\item[(1)] A shortest $u$-$v$ path $P$ in $G$ is unique.

\item[(2)] We define the shortest $u$-$v$ path as
  $P=(u=v_{0},v_{1},\dots,v_{k}=v)$.
  For any vertex $x$, there is a unique vertex $v_{i} \in V(P)$ such
  that $d_{G}(x,v_{i})=d_{G}(x,V(P))$.

\item[(3)] For $0 \leq i \leq k$, let $V_{i}=\{x \mid x \in V(G) \text{
    and } d_{G}(x,v_{i})=d_{G}(x,V(P))\}$.
  If $e = xy \in E(G)$ for $x \in V_{i}$ and $y \in V_{j}$, then
  $w(e) \geq d_{G}(v_{i},v_{j})$ and
  $|d_{G}(v_{i},x)-d_{G}(v_{j},y)| \leq w(e)-d_{G}(v_{i},v_{j})$.
\end{enumerate}

For $0 \leq i \leq k$, let $G_{i}$ be the subgraph of $G$ induced by
$V_{i}$.

\begin{lem}
  \label{lem:suf_con}
  For $0 \leq i \leq k$, the induced subgraph $G_{i}=G[V_{i}]$ is
  connected.
\end{lem}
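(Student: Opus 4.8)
The plan is to prove connectivity of $G_i = G[V_i]$ by showing that every vertex $x \in V_i$ is joined to $v_i$ by a path lying entirely in $G_i$; since every such path passes through the common vertex $v_i$, this forces $G_i$ to be connected. To set this up, I would first record that $v_i \in V_i$: indeed $d_G(v_i,v_i)=0=d_G(v_i,V(P))$, and by condition~(2) the nearest vertex of $P$ to $v_i$ is unique, so it must be $v_i$ itself. This guarantees $G_i$ is nonempty and gives the distinguished ``hub'' through which all connections will be routed.

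The heart of the argument is the following claim: if $x \in V_i$ and $Q=(v_i=y_0,y_1,\dots,y_m=x)$ is a shortest $v_i$-$x$ path in $G$, then every $y_l$ lies in $V_i$. I would establish this by contradiction. Suppose some $y=y_l$ belongs to $V_j$ with $j \neq i$. Since $y$ is an interior vertex of a shortest path, $d_G(v_i,x)=d_G(v_i,y)+d_G(y,x)$. Because $y \in V_j$ and the nearest vertex of $P$ to $y$ is \emph{unique} by condition~(2), we obtain the strict inequality $d_G(y,v_j) < d_G(y,v_i)$. Combining these facts with the triangle inequality yields
\[
d_G(x,v_j) \le d_G(x,y)+d_G(y,v_j) < d_G(x,y)+d_G(y,v_i) = d_G(x,v_i) = d_G(x,V(P)),
\]
which contradicts $d_G(x,V(P)) \le d_G(x,v_j)$. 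Hence no such $y$ exists, and $Q$ lies entirely in $V_i$.

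With the claim in hand the conclusion is immediate: the path $Q$ uses only vertices of $V_i$, and consecutive vertices of $Q$ are adjacent in $G$, so $Q$ is a path in the induced subgraph $G_i$ joining $x$ to $v_i$. As $x \in V_i$ was arbitrary, every vertex of $G_i$ lies in the same component as $v_i$, so $G_i$ is connected.

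I expect the only delicate point to be pinning down exactly where strictness enters, since the inequality chain collapses without $d_G(y,v_j) < d_G(y,v_i)$; this strictness is supplied precisely by the uniqueness assertion in condition~(2). It is worth a sentence confirming that this uniqueness also rules out degenerate situations (for instance distinct path vertices tying for the minimum distance), which is what keeps the middle inequality strict and makes the contradiction genuine.
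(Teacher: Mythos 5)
Your proof is correct and takes essentially the same approach as the paper: both arguments take a vertex $y \in V_j$ with $j \neq i$ on a shortest $x$--$v_i$ path, use the strictness $d_G(y,v_j) < d_G(y,v_i)$ supplied by the uniqueness in condition~(2), and derive the contradiction $d_G(x,v_j) < d_G(x,v_i)$ with $x \in V_i$. The only difference is packaging --- the paper starts by assuming $G_i$ is disconnected, whereas you directly prove that every shortest $v_i$--$x$ path stays inside $V_i$ --- but the key inequality chain is identical.
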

\begin{proof}
  Assume that $G_{i}$ is disconnected for some $i$.
  Let $x$ be a vertex in a component that does not contain $v_{i}$.
  So, a shortest $x$-$v_{i}$ path $P_{x}$ of $G$ have to contain an
  edge $e=yw$ such that $y \in V_{i}$ and $w \in V_{j}$ for $j \neq
  i$.
  Since $P_{x}$ is a shortest $x$-$v_{i}$ path $G$, we have
  $d_{G}(x,v_{i}) = d_{G}(x,w) + d_{G}(w,v_{i})$.
  By the definition of $V_{i}$, we have $d_{G}(w,v_{j}) <
  d_{G}(w,v_{i})$.
  Hence, we obtain
  \begin{align*}
    d_{G}(x,v_{i})
    &= d_{G}(x,w) + d_{G}(w,v_{i}) \\
    &> d_{G}(x,w) + d_{G}(w,v_{j}) \\
    &\geq d_{G}(x,v_{j}).
  \end{align*}
  This contradicts the fact that $x \in V_{i}$.
\end{proof}

\begin{lem}
  \label{lem:suf_dist}
  For $0 \leq i \leq k$ and any vertex $x \in V_{i}$,
  \begin{displaymath}
    d_{G_{i}}(v_{i}, x) = d_{G}(v_{i}, x).
  \end{displaymath}
\end{lem}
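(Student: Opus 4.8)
The plan is to prove the sharper statement that for every $x \in V_i$ there is a shortest $v_i$–$x$ path of $G$ whose vertices all lie in $V_i$. Granting this, such a path is a path of the induced subgraph $G_i$, so $d_{G_i}(v_i,x) \le d_G(v_i,x)$; and since every $v_i$–$x$ path of $G_i$ is also a path of $G$ (and at least one exists because $G_i$ is connected by Lemma~\ref{lem:suf_con}), we also have $d_{G_i}(v_i,x) \ge d_G(v_i,x)$, which gives the desired equality.

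I would prove the sharper statement by induction on the number of edges $m$ of a shortest $v_i$–$x$ path $P_x = (v_i = z_0, z_1, \dots, z_m = x)$ in $G$. The cases $m \le 1$ are immediate, since $v_i$ and $x$ both lie in $V_i$. For the inductive step the crux is to show that the penultimate vertex $z_{m-1}$ lies in $V_i$. Suppose not, say $z_{m-1} \in V_j$ with $j \ne i$, and write $e = z_{m-1}z_m \in E(G)$. Because $P_x$ is a shortest path, $d_G(v_i,x) = d_G(v_i,z_{m-1}) + w(e)$; and because $z_{m-1} \in V_j$ while condition (2) makes $v_j$ the \emph{unique} closest vertex of $P$ to $z_{m-1}$, we have $d_G(v_j,z_{m-1}) < d_G(v_i,z_{m-1})$. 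Combining these gives $d_G(v_i,x) > d_G(v_j,z_{m-1}) + w(e)$. On the other hand, applying condition (3) to the edge $e$ (with $x \in V_i$ and $z_{m-1} \in V_j$) yields $d_G(v_i,x) - d_G(v_j,z_{m-1}) \le w(e) - d_G(v_i,v_j) \le w(e)$, i.e.\ $d_G(v_i,x) \le d_G(v_j,z_{m-1}) + w(e)$, contradicting the previous strict inequality. Hence $z_{m-1} \in V_i$.

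Once $z_{m-1} \in V_i$ is established, the prefix $(z_0,\dots,z_{m-1})$ is a shortest $v_i$–$z_{m-1}$ path with $m-1$ edges, so the induction hypothesis supplies a shortest $v_i$–$z_{m-1}$ path lying entirely in $V_i$; appending the edge $e$, whose two ends now both lie in $V_i$, produces a shortest $v_i$–$x$ path inside $V_i$ and completes the induction. I expect the main obstacle to be exactly this inductive step: finding the right way to rule out $z_{m-1} \notin V_i$, which is resolved by playing the shortest-path equality $d_G(v_i,x) = d_G(v_i,z_{m-1}) + w(e)$ against the distance bound of condition (3), using the strict inequality $d_G(v_j,z_{m-1}) < d_G(v_i,z_{m-1})$ coming from the uniqueness in condition (2) together with the nonnegativity $d_G(v_i,v_j) \ge 0$.
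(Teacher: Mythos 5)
Your proof is correct, but it takes a genuinely different (and heavier) route than the paper's. The paper disposes of the lemma with a single contradiction and no induction: if $d_{G_i}(v_i,x) > d_G(v_i,x)$ for some $x \in V_i$, then a shortest $v_i$--$x$ path of $G$ must contain a vertex $y \in V_j$ with $j \neq i$, and then
\begin{displaymath}
  d_G(v_i,x) = d_G(v_i,y) + d_G(y,x) > d_G(v_j,y) + d_G(y,x) \geq d_G(v_j,x),
\end{displaymath}
where the strict step is the same appeal to the uniqueness in condition (2) that you make, and the last step is just the triangle inequality; this contradicts $x \in V_i$. Since this argument applies to an \emph{arbitrary} vertex of the path, not only the penultimate one, no induction is needed, and since the triangle inequality does the work you assign to condition (3), the lemma actually follows from conditions (1) and (2) alone --- a dependency worth noticing, even though all three conditions are assumed in the sufficiency proof. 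You could make the same simplification inside your own inductive step: from $d_G(v_i,x) > d_G(v_j,z_{m-1}) + w(e)$, the triangle inequality gives $d_G(v_j,z_{m-1}) + w(e) \geq d_G(v_j,x)$, and $d_G(v_j,x) \geq d_G(v_i,x)$ because $x \in V_i$, already a contradiction. Two minor points: concatenating the replacement $v_i$--$z_{m-1}$ path with $e$ is a priori only a walk of length $d_G(v_i,x)$, so you should either note that with positive weights a shortest $v_i$--$z_{m-1}$ path cannot pass through $x$, or extract from the walk a path (still inside $V_i$) of no greater length. And while your induction yields the constructive statement that \emph{some} shortest $v_i$--$x$ path lies inside $V_i$, the paper's one-step argument in fact gives the stronger conclusion that \emph{every} shortest $v_i$--$x$ path stays inside $V_i$.
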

\begin{proof}
  Since $G_{i}$ is a connected subgraph of $G$, clearly
  $d_{G_{i}}(v_{i}, x) \geq d_{G}(v_{i}, x)$.
  Assume that there is a vertex $x \in V_{i}$ such that
  $d_{G_{i}}(v_{i}, x) > d_{G}(v_{i}, x)$.

  In this case, a shortest $v_{i}$-$x$ path contains a vertex $y \in
  V_{j}$ and $j \neq i$.
  Hence
  \begin{align*}
    d_{G}(v_{i},x) &= d_{G}(v_{i}, y) + d_{G}(y, x) \\
    &> d_{G}(v_{j}, y) + d_{G}(y, x) \\
    &\geq d_{G}(v_{j},x).
  \end{align*}
  This contradicts the fact that $x \in V_{i}$.
\end{proof}

Now we are ready to prove the sufficiency of Theorem~\ref{thm:main}.

\begin{proof}[Proof of Sufficiency]
  By Lemma~\ref{lem:suf_con}, $G_{i}$ is connected.
  So, $G_{i}$ has a distance-preserving spanning tree $T_{i}$ from
  $v_{i}$.
  We define a spanning tree $T$ of $G$ that has the edge set
  \begin{equation}
    \label{eq:9}
    E(T) = E(P) \cup E(T_{0}) \cup E(T_{1}) \cup \dots \cup E(T_{k}),
  \end{equation}
  where $P$ is the unique shortest $u$-$v$ path of $G$.
  We can see easily that $T$ is a spanning tree of $G$.

  We show that the tree $T$ is a common distance-preserving spanning
  tree of $u$ and $v$.
  That is, for any vertex $x$, we show that $d_{T}(u,x) =
  d_{G}(u,x)$ and $d_{T}(v,x) = d_{G}(v,x)$.
  In this proof, we show that $T$ is distance-preserving from $u$.
  We can prove similarly $T$ is distance-preserving from $v$.

  For a vertex $x$, mappings $p$ and $h$ are defined as
  \begin{displaymath}
    \begin{cases}
      p(x) = d_{G}(u,v_{i}), & \text{when $x \in V_{i}$},\\
      h(x) = d_{G}(v_{i},x), & \text{when $x \in V_{i}$},
    \end{cases}
  \end{displaymath}
  and then define $W(x) = p(x) + h(x)$.
  It is easy to see that $d_{T}(u,x) = p(x) + h(x)$ for any $x$.
  By the definition, $W(u_{0})=0+0=0$.
  Let $P_{x}=(u=u_{0},u_{1},\dots,u_{s}=x)$ be a shortest $u$-$x$ path
  in $G$.
  Since $P$ is a shortest path, we have
  $d_{G}(u,u_{i+1})=d_{G}(u,u_{i}) + w(e_{i})$, where
  $e_{i}=u_{i}u_{i+1}$.
  Consider the sequence $W(u_{0}),W(u_{1}),\dots,W(u_{s})$ and the
  value of $|W(u_{i+1}) - W(u_{i})|$.

  We first assume that the edge $e_{i}=u_{i}u_{i+1}$ is a edge of $T$.
  If $e_{i}$ is in $P$, $|W(u_{i+1}) - W(u_{i})| =
  |p(u_{i+1})-p(u_{i})| = w(e_{i})$.
  If $e_{i}$ is $T_{i}$, $|W(u_{i+1}) - W(u_{i})| =
  |h(u_{i+1})-h(u_{i})| = w(e_{i})$.
  Thus we have $|W(u_{i+1}) - W(u_{i})| = w(e_{i})$ when $e_{i}$ is in
  $T$.

  Next we suppose that $e_{i}$ is not in $T$.
  If $u_{i} \in V_{j}$ and $u_{i+1} \in V_{j'}$, by the
  condition~(3), we obtain
  \begin{align*}
    |W(u_{i+1}) - W(u_{i})|
    &= |((p(u_{i+1})-p(u_{i})) + (h(u_{i+1})-h(u_{i}))| \\
    &= |d_{G}(v_{j'}, v_{j}) + (d_{G}(v_{j'},u_{i+1}) - d_{G}(v_{j},u_{i}))| \\
    &\leq d_{G}(v_{j'}, v_{j}) + w(e_{i}) - d_{G}(v_{j'},v_{j}) \text{ (by condition~(3))} \\
    &= w(e_{i}).
  \end{align*}
  In both cases, we obtain $|W(u_{i+1}) - W(u_{i})| \leq w(e_{i})$.
  Hence,
  \begin{align*}
    d_{T}(u,x)
    &= W(u_{s}) - W(u_{0}) \\
    &= \sum_{0 \leq i \leq s-1} (W(u_{i+1}) - W(u_{i})) \\
    &\leq \sum_{0 \leq i \leq s-1} w(e_{i}) \\
    &= d_{G}(u,x).
  \end{align*}
  Since $T$ is a connected subgraph of $G$, we have $d_{T}(u,x) \geq
  d_{G}(u,x)$.
  Thus, we obtain $d_{T}(u,x) = d_{G}(u,x)$ for any vertex $x$.
\end{proof}

We have completed the proof of Theorem~\ref{thm:main}.



\end{document}